\documentclass[journal,comsoc,twocolumns,10pt]{IEEEtran}

\usepackage[T1]{fontenc}
\usepackage{bm}
\def\bs{\bm}
\usepackage{amssymb}
\usepackage{amsmath}

\interdisplaylinepenalty=2500
\usepackage{amsthm}
\newtheorem{theorem}{Theorem}[section]

\newtheorem{proposition}{Proposition}
\newtheorem{corollary}{Corollary}

\usepackage[noadjust]{cite}

\usepackage[cmintegrals]{newtxmath}
\DeclareSymbolFont{largesymbolsCM}{OMX}{cmex}{m}{n}

\let\sum\relax
\DeclareMathSymbol{\sum}{\mathop}{largesymbolsCM}{"50}

\DeclareMathAlphabet{\mathcal}{OMS}{cmsy}{m}{n}

\usepackage{newtxtext}

%
\newcommand{\Mathscr}[1]{\mathcal{#1}}
\newcommand{\Mathsf}[1]{\text{\usefont{OT1}{cmss}{m}{n}#1}}

%
\hyphenpenalty=10

\hyphenation{op-tical net-works semi-conduc-tor}
\usepackage{multirow}
\setlength{\parskip}{0.1ex plus 0.1ex minus 0.1ex}
\usepackage[shrink=40, stretch=20]{microtype}
\renewcommand{\geq}{\geqslant}
\renewcommand{\leq}{\leqslant}
\def\be{\begin{equation}}
\def\ee{\end{equation}}
\usepackage{stackengine}
\setstackgap{S}{-1.5pt}
\def\leqsim{\lesssim}
\def\geqsim{\gtrsim}
\def\+{\hspace{0.2ex}}
\def\-{\hspace{-0.2ex}}

\usepackage[dvipsnames, table]{xcolor}
\usepackage{tikz-cd}
\tikzcdset{arrow style=tikz, diagrams={>=stealth}}

\def\sub#1{_{\mathrm{#1}}}
\newcommand{\E}[2]{\mathbf{E}_{#1}[\+{#2}\+]}
\newcommand{\Epar}[2]{\mathbf{E}_{#1}\!\left[{#2}\right]}
\def\T{\text{\!$^{(T)}$}}

\linespread{1.01}
\abovedisplayskip=12pt plus 3pt minus 9pt 
\belowdisplayskip=12pt plus 3pt minus 9pt
\abovedisplayshortskip=1pt plus 3pt 
\belowdisplayshortskip=7pt plus 3pt minus 4pt
\allowdisplaybreaks

\pagestyle{empty}
\begin{document}
\title{Approaching Capacity Without Pilots \\ via Nonlinear Processing at the Edge}

\author{Guido Carlo Ferrante,~\IEEEmembership{Member,~IEEE}
\thanks{Ericsson Research, Stockholm, Sweden.}
}

\markboth{}%
{}

\maketitle
\thispagestyle{empty}

\begin{abstract}
A nonlinear detector derived within a maximum likelihood estimation framework is shown to be effective in retrieving the channel coefficients and data of users on the uplink channel of a noncooperative wireless system without the access point having any prior channel state information (no CSI or noncoherent setup). Rather than relying on pilot-assisted transmissions, it is shown that a maximum likelihood-based detector emerges naturally from an information-theoretic argument. The assumptions under which the detector is designed are as follows: 1) the uplink data from different users are independent and non-Gaussian; 2) the coherence block of the channel is much larger than the number of users (in practice, the square of the number of users); 3) the number of antennas at the access point or base station is equal to the number of users; 4) users continuously transmit within the coherence block; and 5) the transmission occurs at high signal-to-noise ratio. No coordination between the access point and unintended users (interference) is needed. Some coordination with intended users is needed. Finally, the system is assumed to be symbol-synchronous. 
\end{abstract}

\section{Introduction}

The problem tackled in this paper is that of estimating the data transmitted by the users inside a cell in the uplink of a cellular network without using pilots. Removing pilots from the start would also remove the possibility of pilot contamination, which has been a theme of considerable research \cite{Mar:2010, MulCotVeh:2014, HoytenDeb:2013}. 
Such detection without pilots can be classified as blind detection \cite{Honetal:1995, FerGerQueWin:2017}. We start from first principles and interpret the problem of estimating the channels between the users and the base station through the lens of maximum likelihood estimation (MLE). 

The results of this paper are two: first, MLE is shown to be effective in estimating channels and data of all users in the network provided that the coherence block size for these users is large enough; second, the MLE is emergent from an information-theoretic argument, which makes it a natural, and asymptotically optimal, detector.


There are three parameters that play a vital role in determining the performance of the network at high signal-to-noise ratio (SNR): the number of users in the network, $K$; the length of the coherence block, $T$; and the number of antennas at each base station, $n$. When $T>2K$, the possible presence of pilot contamination (due to lack of coordination among different cells) destroys some of the degrees of freedom of the uplink channel. In fact, the rate of all users that used the same pilot sequence is upper bounded by a constant at high SNR \cite{Mar:2010, HoytenDeb:2013}. This is not the behavior that we would get by using a centralized approach where pilots are not reused in different cells; in that case, all the degrees of freedom of the channel could be exploited. We show that it is possible to achieve the latter without pilots and with no coordination among cells. 



\subsection*{Notation} It is useful to denote by $\geqsim$ (resp. $\leqsim$) inequalities that are true up to a constant in $\rho$. For example, if $\varphi_{1}$ and $\varphi_{2}$ are functions of $\rho$, then we can write $\varphi_{1}(\rho)\geqsim \varphi_{2}(\rho)$ as a shorthand for $\varphi_{1}(\rho)\geq \varphi_{2}(\rho) + O_{\rho}(1)$, where $O_{\rho}(1)$ is a constant in $\rho$. We write $\varphi_{1}\simeq \varphi_{2}$ if $\varphi_{1}\geqsim \varphi_{2}$ and $\varphi_{1}\leqsim \varphi_{2}$. Moreover, we use $x\wedge y$ as a shorthand for $\min\{x,y\}$. The $i$\textsuperscript{th} vector of a canonical basis is denoted by $\bs{e}_{i}$, that is, $(\bs{e}_{i})_{j}=\delta_{ij}$ where $\delta_{ij}$ denotes the Kronecker delta; the dimension of the vector is kept implicit.

\section{System Model}
Suppose $K$ users populate a network with $L$ base stations, each of which is equipped with $n$ antennas. A block-fading model is assumed, where the coherence block size is equal to $T$ channel uses. The signal received by base station $\ell$ over the coherence block can be written as follows:
\begin{align} 
\bs{Y}_{\ell} = \sum_{\ell'=1}^{L} \bs{H}_{\ell'\ell}\bs{X}_{\ell'} + \bs{Z}_{\ell} 
\end{align} 
where $\bs{Y}_{\ell}$ is the $n\times T$ matrix of received symbols, $\bs{H}_{\ell'\ell}$ is the $n\times K_{\ell'}$ channel between the $K_{\ell'}$ users in cell $\ell'$ and the base station, $\bs{X}_{\ell'}$ is the $K_{\ell'}\times T$ matrix of symbols transmitted by users in cell $\ell'$, and $\bs{Z}_{\ell}$ is an additive white Gaussian noise. Without loss of generality, the elements of $\bs{Z}_{\ell}$ are distributed according to a proper complex Normal distribution with zero mean and unit variance: $(\bs{Z}_{\ell})_{ij}\sim\mathcal{CN}(0,1)$. We assume $\E{}{\| \bs{x}_{\ell k}\|^{2}} \leq T\rho$, where $\bs{x}_{\ell k}$ is the $k$\textsuperscript{th} row of $\bs{X}_{\ell}$, that is, the power constraint is assumed on the symbols transmitted by each user over the $T$ channel uses of the coherence block. 

For compactness, $\bs{Y}_{\ell}$ can be rewritten in the equivalent form
\begin{align} 
\bs{Y}_{\ell} & = \bs{H}_{\ell\ell} \bs{X}_{\ell} + \bs{H}_{-\ell,\ell} \bs{X}_{-\ell} + \bs{Z}_{\ell} \\ 
&  = \bs{H}_{\ell}\bs{X}+\bs{Z}_{\ell} 
\end{align} 
where $\bs{H}_{\ell}$ is the $n\times K$ matrix of channel coefficients between all users in the network and the base station, $\bs{H}_{-\ell,\ell}$ is the $n\times K_{-\ell}$ matrix of channel coefficients between the $K_{-\ell}$ users outside the cell and the base station, $\bs{X}$ is the $K\times T$ matrix of symbols transmitted by all users in the network, and $\bs{X}_{-\ell}$ is the $K_{-\ell}\times T$ matrix of symbols transmitted by users outside the cell. From the above, it results $K=K_{1}+K_{2}+\cdots+K_{L}$ and $K_{-\ell}=K-K_{\ell}$. For the sake of simplicity, we assume that the columns of $\bs{H}_{\ell}$, which correspond to the channel vectors between the antenna array and a specific user, are independent random variables distributed according to a proper complex Gaussian distribution with diagonal covariance: $\bs{H}_{\ell}\bs{e}_{k}\sim\mathcal{CN}(\bs{0},\bs{D}_{\ell k})$ where $\bs{D}_{\ell k}$ is diagonal, invertible, and known.

Throughout the paper, it is assumed that $n=K$, which corresponds to massive MIMO when $L\gg1$, and that $T>2K$, which we shall refer to as the \textit{long coherence block} assumption. 

Furthermore, we assume that the distribution of transmitted symbols admits a density $\Mathsf{f}$, namely $\bs{X}_{\ell}\sim P$ with $\Mathsf{f}=\mathrm{d}P$, that is independent of all parameters in the network; this excludes, among other consequences, intermittent or bursty activity of users. Moreover, $\bs{X}_{\ell}$ is assumed to be independent of $\bs{X}_{\ell'}$ for all $\ell'\neq \ell$, which models the independence of transmissions in different cells. To summarize,
\be \label{eq:indepsymb} \mathrm{d}P_{\bs{X}}(\bs{X})=\!\bigwedge_{\ell=1}^{L}\Mathsf{f}(\bs{X}_{\ell}) \, \mathrm{d}\bs{X}_{\ell}. \ee
Among these distributions, those such that $h(\bs{X}_{\ell})$ scales as $K_{\ell}T\log\rho$ for large $\rho$ will be referred to as \textit{maximally entropic distributions} and it will be said that they belong to the \textit{maximally entropic ensemble}.

The quantity of interest is the mutual information $I(\bs{X}_{\ell};\bs{Y}_{\ell})$ where symbols are distributed according to \eqref{eq:indepsymb}.

\section{Single-User Setup}
\label{sec:single}

The model presented above reduces to a single-user setup where antennas are used independently when $L=1$. Let the capacity of the single-user MIMO channel be
\[ C\sub{SU} := \sup_{P} \frac{1}{T}I(\bs{X};\bs{Y}) \]
where, according to \eqref{eq:indepsymb}, $\bs{X}\sim P$, $dP(\bs{X})=\Mathsf{f}(\bs{X})d\bs{X}$, and the power constraint is imposed over the rows of $\bs{X}$, namely $\E{}{\| \bs{e}_{i}^{\dag}\bs{X} \|^{2}}\leq \rho T$. In this section, we briefly review the analysis of the degrees of freedom of this channel, the optimality of pilots at high SNR, and finally propose an MLE-based detector. 

\subsection{Capacity and Degrees of Freedom}
A classical result \cite{ZheTse:2002} on the capacity of single-user MIMO systems is that 
\[ C\sub{SU} = n^{*}(1-n^{*}\!/T)\log\rho + o(\log\rho) \]
where $n^{*} = n \wedge K \wedge T/2$. In the context of this paper, where we are assuming $n=K<T/2$, this reduces to
\[  C\sub{SU} = n(1-n/T)\log\rho + o(\log\rho). \]
In other words, the number of degrees of freedom, namely the pre-log factor of capacity at high SNR, over the coherence block is equal to $n(T-n)$.

\subsection{Achieving Capacity With Pilots}
It is possible to achieve $n(T-n)$ degrees of freedom by using $n$ orthogonal pilot sequences. In fact, $n$ out of the $T$ channel uses can be used to transmit the $n$ orthogonal sequences and the remaining $T-n$ channel uses can be used to transmit data. Then, $n$ receive antennas can demultiplex $K=n$ users per channel use. Hence, pilots are asymptotically optimal in the assumed setup.

\subsection{Approaching Capacity Without Pilots}
In this section, we show that the same number of degrees of freedom can be achieved without using pilots. In the single-user setting, this is just another method to approach capacity; however, the extension to cellular networks brings with it nontrivial consequences. 

Denote $\Mathsf{f}_{\theta}\T$ the conditional density of $\bs{Y}$ given a realization of the channel, that is, $\Mathsf{f}_{\bs{Y}|\bs{H}}=:\Mathsf{f}_{\theta}\T$. It will be useful to denote $\bs{Y}^{(k)}$ the received signal $\bs{Y}$ for the realization $\bs{H}=\bs{H}^{(k)}$. For example, the following diagram shows two possible cases:
\[
\hspace{1.5cm}\begin{tikzcd}[ampersand replacement=\&, column sep=4em, row sep=2em]
\bs{X} \ar[dr, "\bs{H}=\bs{H}^{(1)}" description, end anchor = west] \ar[r, "\bs{H}=\bs{H}^{(0)}"] \& \bs{Y}^{(0)}\sim \Mathsf{f}_{\theta_{\hspace{0.5pt}0}}^{\,(T)}:=\Mathsf{f}_{\bs{Y}|\bs{H}=\bs{H}^{(0)}} \\
\& \bs{Y}^{(1)}\sim \Mathsf{f}_{\theta_{1}}^{\,(T)}:=\Mathsf{f}_{\bs{Y}|\bs{H}=\bs{H}^{(1)}}\& 
\end{tikzcd}
\]
The main result of this section is the following:
\begin{theorem}\label{thm:thm1}Let $n=K$. For all absolutely continuous distributions in the maximally entropic ensemble, it holds that
\begin{align*}
 I(\bs{X}; \bs{Y}) & \geqsim h(\bs{Y}|\bs{H})  - nK\log\rho \\
& \simeq n(T-n)\log\rho
\end{align*}
where the conditional differential entropy of the output given the channel can be expressed as
\[ h(\bs{Y}|\bs{H}) = \Epar{}{ - \Epar{}{\log \Mathsf{f}_{\theta^{*}}\T(\bs{Y}) \+ \Big|\+ \bs{H}=\bs{H}^{(0)} } }\]
and $\theta^{*}=\arg\max\limits_{\theta}\bar{\Mathscr{L}}(\theta;\bs{Y}^{(0)})$ with
\begin{align*}
\bar{\Mathscr{L}}(\theta;\bs{Y}) = & - D(\Mathsf{f}_{\theta_{0}}\!\-\T\| \Mathsf{f}_{\theta}\T|\bs{H}=\bs{H}^{(0)}) \\ 
& - \Epar{}{\log \Mathsf{f}_{\theta}\T(\bs{Y}) \+\Big| \+ \bs{H}=\bs{H}^{(0)} }.
\end{align*}
\end{theorem}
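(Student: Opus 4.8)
The plan is to establish the two relations separately, and then to read off the displayed identity for $h(\bs{Y}|\bs{H})$ as a consequence of the consistency of the likelihood map. For the first relation I would start from the exact decomposition $I(\bs{X};\bs{Y}) = h(\bs{Y}) - h(\bs{Y}|\bs{X})$ and discard the channel information through the elementary bound $h(\bs{Y}) \geq h(\bs{Y}|\bs{H})$ (conditioning reduces entropy), which is the safe direction for a lower bound. It then remains to control $h(\bs{Y}|\bs{X})$ from above. Conditioned on $\bs{X}$, the output $\bs{Y} = \bs{H}\bs{X} + \bs{Z}$ is zero-mean proper Gaussian with independent rows (the entries of $\bs{H}$ are independent and the diagonal covariances $\bs{D}_{\ell k}$ couple nothing across rows), the $i$th row having conditional covariance $\bs{I}_T + \bs{X}^{\dag}\bs{\Delta}_i\bs{X}$, where $\bs{\Delta}_i$ collects the per-antenna channel variances and the signal part has rank at most $K$. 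Using Sylvester's identity $\det(\bs{I}_T + \bs{X}^{\dag}\bs{\Delta}_i\bs{X}) = \det(\bs{I}_K + \bs{\Delta}_i\bs{X}\bs{X}^{\dag})$, concavity of $\log\det$ (so that $\E{}{\log\det M}\leq\log\det\E{}{M}$), and the power constraint $\E{}{\|\bs{e}_k^{\dag}\bs{X}\|^2}\leq\rho T$ on the trace, Jensen's inequality yields $h(\bs{Y}|\bs{X}) \leqsim nK\log\rho$. Combining the two estimates gives $I(\bs{X};\bs{Y}) \geqsim h(\bs{Y}|\bs{H}) - nK\log\rho$.

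For the second relation I would compute $h(\bs{Y}|\bs{H})$ directly and show $h(\bs{Y}|\bs{H}) \simeq nT\log\rho$, so that subtracting $nK\log\rho$ leaves $n(T-K)\log\rho = n(T-n)\log\rho$ since $n=K$. Because $n=K$, the realized channel is square and almost surely invertible, so the change of variables $\bs{Y} = \bs{H}(\bs{X}+\bs{H}^{-1}\bs{Z})$ gives $h(\bs{Y}|\bs{H}=\bs{h}) = h(\bs{X}+\bs{h}^{-1}\bs{Z}) + T\log|\det\bs{h}|^2$. For the lower bound, ``adding independent noise cannot decrease differential entropy'' gives $h(\bs{X}+\bs{h}^{-1}\bs{Z})\geq h(\bs{X})$, and taking expectations yields $h(\bs{Y}|\bs{H}) \geq h(\bs{X}) + T\,\E{}{\log|\det\bs{H}|^2}$, where $h(\bs{X})\simeq KT\log\rho$ by the maximally entropic assumption and $\E{}{\log|\det\bs{H}|^2}=O_\rho(1)$. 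For the matching upper bound I would invoke the Gaussian maximum-entropy principle (or Hadamard's inequality on the conditional covariance) together with the power constraint to obtain $h(\bs{Y}|\bs{H})\leqsim nT\log\rho$. The two bounds give the claimed $\simeq$.

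It remains to justify that the displayed formula for $h(\bs{Y}|\bs{H})$ may be written with the maximizer $\theta^{*}$ in place of the truth $\theta_{0}$. Writing the objective as $\bar{\Mathscr{L}}(\theta) = -D(\Mathsf{f}_{\theta_{0}}\T\|\Mathsf{f}_{\theta}\T|\bs{H}^{(0)}) - \E{}{\log\Mathsf{f}_{\theta}\T(\bs{Y})|\bs{H}^{(0)}}$, I would note that both summands are maximized exactly when $\Mathsf{f}_{\theta}\T = \Mathsf{f}_{\theta_{0}}\T$: the divergence vanishes there, and the cross-entropy $-\E{\theta_{0}}{\log\Mathsf{f}_{\theta}\T}$ attains its minimum $h(\bs{Y}|\bs{H}^{(0)})$ there, both by the nonnegativity of relative entropy (Gibbs' inequality). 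Hence any maximizer $\theta^{*}$ satisfies $\Mathsf{f}_{\theta^{*}}\T=\Mathsf{f}_{\theta_{0}}\T$, so $-\E{}{\log\Mathsf{f}_{\theta^{*}}\T(\bs{Y})|\bs{H}^{(0)}} = h(\bs{Y}|\bs{H}=\bs{H}^{(0)})$, and averaging over $\bs{H}^{(0)}$ recovers the stated identity.

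The step I expect to be the main obstacle is the high-SNR evaluation of $h(\bs{Y}|\bs{H})$, and specifically the uniform control of the error terms in $\rho$: one must verify that $\E{}{\log|\det\bs{H}|^2}$ is finite (the logarithmic singularity at $\det\bs{H}=0$ is integrable against the Gaussian density, using invertibility of the $\bs{D}_{\ell k}$), and that the maximally entropic assumption delivers $h(\bs{X}) = KT\log\rho + O_\rho(1)$ with matching constants in both directions, so that the $\simeq$ is genuine and not merely one-sided. The consistency argument for $\theta^{*}$ is the conceptual heart of the ``emergence of MLE'' claim, but mathematically it reduces to Gibbs' inequality; identifiability, which hinges on the non-Gaussianity of $\bs{X}$, affects $\theta^{*}$ only up to a distribution-preserving ambiguity that leaves $\Mathsf{f}_{\theta^{*}}\T$, and hence the entropy, unchanged.
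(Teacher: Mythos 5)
Your proposal is correct and follows essentially the same route as the paper: the decomposition $I(\bs{X};\bs{Y}) = h(\bs{Y}) - h(\bs{Y}|\bs{X})$ with $h(\bs{Y}) \geq h(\bs{Y}|\bs{H})$, the bound $h(\bs{Y}|\bs{X}) \leqsim nK\log\rho$ from conditional Gaussianity of the output, the evaluation $h(\bs{Y}|\bs{H}) \simeq h(\bs{X}) + \E{}{\log\det\bs{H}} \simeq nT\log\rho$ under the maximally entropic assumption, and the Kullback--Leibler/cross-entropy identity to express $h(\bs{Y}|\bs{H})$ through $\theta^{*}$. The only notable difference is one of rigor rather than route: you fill in steps the paper merely asserts (Sylvester's identity and Jensen for $h(\bs{Y}|\bs{X})$, the change of variables and integrability of $\log|\det\bs{H}|$ for $h(\bs{Y}|\bs{H})$), and your Gibbs-inequality argument that any maximizer satisfies $\Mathsf{f}_{\theta^{*}}\T = \Mathsf{f}_{\theta_{0}}\T$, so that the divergence term genuinely vanishes at $\theta^{*}$, is a cleaner justification of the final identity than the paper's terse ``choosing the $\theta$ maximizing the right-hand side'' (whose displayed objective is, as literally written, constant in $\theta$).
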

\begin{proof}
Let us express the mutual information $I(\bs{X};\bs{Y})$ in terms of the differential entropies $h(\bs{Y})$ and $h(\bs{Y}|\bs{X})$, and study each term separately with the goal of deriving a tight lower bound. The latter conditional entropy can be bounded as follows
\begin{align*}
h(\bs{Y}|\bs{X}) 
& \leqsim n \E{}{\log\det(\bs{I}+\bs{X}\bs{X}^{\dag})}\\ 
& \leqsim n(n\wedge T) \log\rho
 \end{align*}
where the first inequality follows from the independence of channels across antennas, which is a worst-case scenario, and by exploiting their Gaussianity; and the second inequality follows from the input distribution being maximally entropic. %
Overall, this term removes $n^{2}$ degrees of freedom. The former term, $h(\bs{Y})$, can be bounded as follows:
\begin{align*}
h(\bs{Y}) \geqsim h(\bs{Y}|\bs{H}) & \simeq h(\bs{X}) + \E{}{\log\det\bs{H}} \\ & \simeq KT\log\rho
\end{align*}
where the inequality follows from the fact that conditioning reduces differential entropy, and the asymptotic equalities follow from ignoring noise and using the maximal entropic assumption on inputs. Overall, this term acquires $nT$ degrees of freedom. The bound is tight because $h(\bs{Y}) \leq nT\log\rho + O_{\rho}(1)$, and $K=n$ by assumption. The claim follows by expressing $h(\bs{Y}|\bs{H})$ in terms of a Kullback--Leibler divergence:
\begin{align*}
h(\bs{Y}|\bs{H}=\bs{H}^{(0)}) = & -D(\Mathsf{f}_{\theta_{0}}\!\-\T\| \Mathsf{f}_{\theta}\T|\bs{H}=\bs{H}^{(0)}) \\ 
 &  - \Epar{}{\log \Mathsf{f}_{\theta}\T(\bs{Y}) \+\Big|\+\bs{H}=\bs{H}^{(0)} }.
 \end{align*}
Choosing a particular $\theta$, the one maximizing the right hand side, concludes the proof.
\end{proof}
In words, for any maximally entropic input distribution, a detector that estimates the channel by maximizing $\bar{\Mathscr{L}}(\theta;\bs{Y}^{(0)})$ is optimal at high SNR since it exploits all the available degrees of freedom. 

It is important to realize that $\bar{\Mathscr{L}}(\theta;\bs{Y}^{(0)})$ is strictly related to a log-likelihood function, and thus the algorithm above suggests an approach based on maximum likelihood estimation. In fact, upon receiving $\bs{Y}^{(0)}$, the receiver can compute the log-likelihood 
\begin{equation}\label{eq:logli} \Mathscr{L}(\theta;\bs{Y}^{(0)}) := \log \Mathsf{f}_{\theta}\T(\bs{Y}^{(0)}) 
\end{equation}
for some $\theta$. If we had many independent observations of $\bs{Y}^{(0)}$, e.g. a super-channel $\bs{Y}^{(0)}(1), \bs{Y}^{(0)}(2), \ldots, \bs{Y}^{(0)}(N)$, we could compute
\[
\frac{1}{N} \sum_{i=1}^{N} \log \Mathsf{f}_{\theta}\T(\bs{Y}^{(0)}(i\-)) \xrightarrow{\text{a.s.}} \E{}{\Mathscr{L}(\theta;\bs{Y}^{(0)})} = \bar{\Mathscr{L}}(\theta;\bs{Y}^{(0)}).
\]
However, we do not have the luxury of collecting many observations due to the block-fading assumption, and we need to accept the rough estimate $\bar{\Mathscr{L}}(\theta;\bs{Y}^{(0)})\approx \log \Mathsf{f}_{\theta}\T(\bs{Y}^{(0)})$. Thus, in practice, we will solve the following problem:
\be\label{eq:gentheta} \hat{\theta} = \arg\max_{\theta} \log \Mathsf{f}_{\theta}\T(\bs{Y}^{(0)}). \ee
Here, $\theta$ is a parameter that is linked to the channel. An equivalent, more explicit form for \eqref{eq:gentheta} is
\be\label{eq:explicit-dep} \bs{\hat{B}} = \arg\max\limits_{\bs{B}} \Mathscr{L}(\bs{B};\bs{Y}^{(0)}) \ee
where $\Mathscr{L}(\bs{G};\bs{Y}^{(0)}):=\log \Mathsf{f}_{\bs{Y}|\bs{H}=\bs{G}^{-1}}(\bs{Y}^{(0)})$.

Notice though that, in the present context, users can assume that signaling is independent across channel uses without incurring in any loss of degrees of freedom. Under this assumption, the analysis simplifies as follows: the conditional density of the output becomes separable, namely 
\[ \Mathsf{f}_{\theta}\T(\bs{Y}^{(0)}) = \prod_{t=1}^{T} \Mathsf{f}_{\theta}(\bs{y}^{(0)}_{t}) =: \Mathsf{f}_{\theta}^{\otimes T}\!(\bs{y}^{\!(0)}_{1},\dotsc,\bs{y}^{(0)}_{T} ) \]
where $\Mathsf{f}_{\theta}$ is the density over one of the columns $\bs{y}^{(0)}_{1},\dotsc,\bs{y}^{(0)}_{T}$ of $\bs{Y}^{(0)}$; the log-likelihood in \eqref{eq:logli} becomes additive
\[ \log \Mathsf{f}_{\theta}\T(\bs{Y}^{(0)}) = \sum_{t=1}^{T} \log \Mathsf{f}_{\theta}(\bs{y}^{(0)}_{t}); \]
it results $\bar{\Mathscr{L}}(\theta;\bs{Y}^{(0)})=T\bar{\Mathscr{L}}(\theta;\bs{y}_{0})$, $\bs{y}_{0}$ being the generic column of $\bs{Y}^{(0)}$; and one can estimate increasingly better the objective function $\bar{\Mathscr{L}}(\theta;\bs{Y}^{(0)})$ as the coherence block size increases since
\[ \frac{1}{T} \sum_{t=1}^{T} \log \Mathsf{f}_{\theta}(\bs{y}^{(0)}_{t}) - \bar{\Mathscr{L}}(\theta;\bs{y}_{0}) \xrightarrow{\text{a.s.}} 0. \]
The problem in this case is reduced to an \emph{independent component analysis} (ICA) \cite{Com:1994, HyvKarOja:2001}:
\be \hat{\theta} = \arg\max_{\theta} \sum_{t=1}^{T} \log \Mathsf{f}_{\theta}(\bs{y}^{(0)}_{t}). \ee
More explicitly, $\Mathsf{f}_{\theta}$ is given by (we denote $\bs{B}_{0}:= (\bs{H}^{(0)})^{-1}$)
\begin{align*} 
\Mathsf{f}_{\bs{y}|\bs{H} =\bs{H}^{(0)}}(\bs{y}^{(0)}_{t}) 
& = |\det(\bs{H}^{(0)})^{-1}| \,\Mathsf{f}_{\bs{x}}((\bs{H}^{(0)})^{-1}\bs{y}^{(0)}_{t}) \\
& = |\det\bs{B}_{0}| \,\Mathsf{f}_{\bs{x}}(\bs{B}_{0}\bs{y}^{(0)}_{t}). 
\end{align*}
By using the above, we can plug the $T$\!-sample log-likelihood function  
\[ \Mathscr{L}(\bs{B}_{0};\bs{Y}^{(0)}) = T \log|\-\det\bs{B}_{0}| + \sum_{t=1}^{T}\log \Mathsf{f}_{\bs{x}}(\bs{B}_{0}\bs{y}^{(0)}_{t}) \]
into \eqref{eq:explicit-dep} to conclude that
\be\label{eq:explicit} \bs{\hat{B}} = \arg\max\limits_{\bs{B}} \bigg\{\, T \log|\-\det\bs{B}| + \sum_{t=1}^{T}\log \Mathsf{f}_{\bs{x}}(\bs{B}\bs{y}^{(0)}_{t}) \,\bigg\}. \ee


\section{Cellular Setup}
Let the capacity of the uplink in a cell of the noncooperative network be
\[ C\sub{cell} := \sup_{P} \frac{1}{T}I(\bs{X}_{\ell};\bs{Y}_{\ell}) \]
with $P$ satisfying the condition \eqref{eq:indepsymb}. 


\subsection{Upper Bounds}
The following bound is derived by providing to the base station side-knowledge about interference.

\begin{proposition}\label{prop:1}Suppose a genie provides base station $\ell$ with the knowledge of $\bs{H}_{-\ell}\bs{X}_{-\ell}$. Then the following bounds holds:
\[ I(\bs{X}_{\ell};\bs{Y}_{\ell}) \leqsim K_{\ell}(T-K_{\ell})\log\rho.\]
\end{proposition}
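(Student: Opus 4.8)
The plan is to use the genie's side information to strip off the inter-cell interference and thereby reduce the problem to the single-user noncoherent MIMO channel already analysed in Section III, after which the Zheng--Tse degrees-of-freedom count can be quoted directly. Writing $\bs{S}:=\bs{H}_{-\ell,\ell}\bs{X}_{-\ell}$ for the interference term revealed by the genie, the first thing I would establish is that $\bs{X}_\ell$ and $\bs{S}$ are independent: by \eqref{eq:indepsymb} the in-cell symbols are independent of the out-of-cell symbols $\bs{X}_{-\ell}$, and the interfering channel $\bs{H}_{-\ell,\ell}$ is drawn independently of the in-cell quantities. Hence $I(\bs{X}_\ell;\bs{S})=0$ and, since side information cannot decrease mutual information,
\begin{align*}
I(\bs{X}_\ell;\bs{Y}_\ell) &\leq I(\bs{X}_\ell;\bs{Y}_\ell,\bs{S}) \\
&= I(\bs{X}_\ell;\bs{S}) + I(\bs{X}_\ell;\bs{Y}_\ell\mid\bs{S}) = I(\bs{X}_\ell;\bs{Y}_\ell\mid\bs{S}).
\end{align*}

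Next I would note that, given $\bs{S}$, forming $\tilde{\bs{Y}}_\ell:=\bs{Y}_\ell-\bs{S}=\bs{H}_{\ell\ell}\bs{X}_\ell+\bs{Z}_\ell$ is an invertible transformation, so $I(\bs{X}_\ell;\bs{Y}_\ell\mid\bs{S})=I(\bs{X}_\ell;\tilde{\bs{Y}}_\ell\mid\bs{S})$; and because $\tilde{\bs{Y}}_\ell$ is a function of $\bs{H}_{\ell\ell}$, $\bs{X}_\ell$ and $\bs{Z}_\ell$ alone, the pair $(\bs{X}_\ell,\tilde{\bs{Y}}_\ell)$ is independent of $\bs{S}$, which lets me discard the conditioning and obtain $I(\bs{X}_\ell;\tilde{\bs{Y}}_\ell\mid\bs{S})=I(\bs{X}_\ell;\tilde{\bs{Y}}_\ell)$. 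The channel $\tilde{\bs{Y}}_\ell=\bs{H}_{\ell\ell}\bs{X}_\ell+\bs{Z}_\ell$ is exactly a single-user setup with $K_\ell$ transmit streams, $n$ receive antennas, coherence $T$ and the same per-row power constraint, so $I(\bs{X}_\ell;\tilde{\bs{Y}}_\ell)\leq T\,C\sub{SU}$ evaluated for these dimensions.

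The last step is to evaluate the single-user formula with the correct transmit dimension. Here the relevant count becomes $n^{*}=n\wedge K_\ell\wedge T/2$, and using $n=K\geq K_\ell$ together with the long-coherence assumption $T>2K\geq 2K_\ell$ I get $n^{*}=K_\ell$, so $T\,C\sub{SU}=K_\ell(T-K_\ell)\log\rho+o(\log\rho)$, which is the asserted bound.

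I expect the only subtle points to be bookkeeping rather than analysis. The key structural fact is that the genie reveals the single matrix $\bs{H}_{-\ell,\ell}\bs{X}_{-\ell}$ and not its factors, so the subtraction removes the interference without smuggling in any information about the in-cell channel $\bs{H}_{\ell\ell}$; I must state the independence relations precisely enough to justify both discarding $I(\bs{X}_\ell;\bs{S})$ and removing the conditioning on $\bs{S}$. Everything else is inherited from the single-user degrees-of-freedom result, so no fresh high-SNR estimate is required.
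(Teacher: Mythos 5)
Your proof is correct and follows essentially the same route as the paper's: reveal the interference via a genie, use independence of $\bs{X}_\ell$ from the out-of-cell quantities to turn the side information into conditioning, subtract the interference to reduce to the channel $\bs{H}_{\ell\ell}\bs{X}_\ell+\bs{Z}_\ell$, and bound by the cooperative (single-user MIMO) capacity with $n^{*}=K_\ell$ from the Zheng--Tse formula. The only difference is cosmetic: you condition on the product $\bs{H}_{-\ell,\ell}\bs{X}_{-\ell}$ exactly as the proposition states (and spell out the independence bookkeeping the paper leaves implicit), whereas the paper's proof conditions on the pair $(\bs{H}_{-\ell},\bs{X}_{-\ell})$.
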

\begin{proof}
By providing the side-knowledge of signals outside the cell, we get 
\[ I(\bs{X}_{\ell};\bs{Y}_{\ell}) \leq I(\bs{X}_{\ell};\bs{Y}_{\ell}|\bs{H}_{-\ell}, \bs{X}_{-\ell}) = I(\bs{X}_{\ell};\bs{H}_{\ell}\bs{X}_{\ell}+\bs{Z}).\] 
The bound follows from evaluating the capacity of a cooperative system, which constitues an upper bound on the rightmost mutual information.
\end{proof}
\begin{corollary}As $T\to\infty$, the above bound reduces to
\[ I(\bs{X}_{\ell};\bs{Y}_{\ell}) \leqsim K_{\ell}(T+O_{T}(1))\log\rho. \]
\end{corollary}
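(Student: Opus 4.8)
The plan is to start directly from the bound established in Proposition~\ref{prop:1}, namely $I(\bs{X}_{\ell};\bs{Y}_{\ell}) \leqsim K_{\ell}(T-K_{\ell})\log\rho$, and simply re-examine the coefficient of $\log\rho$ in the asymptotic regime $T\to\infty$. Expanding the product gives $K_{\ell}(T-K_{\ell}) = K_{\ell}T - K_{\ell}^{2}$, so the entire $T$-dependence of the pre-log factor is carried by the linear term $K_{\ell}T$, while $-K_{\ell}^{2}$ is a quantity that does not depend on $T$ at all.

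The key observation is that $K_{\ell}$, the number of users in cell $\ell$, is a fixed parameter of the network and is therefore constant in $T$. Consequently the additive correction inside the parenthesis satisfies $-K_{\ell}=O_{T}(1)$, so that $T-K_{\ell}=T+O_{T}(1)$. Substituting this identity into the bound of Proposition~\ref{prop:1} yields $I(\bs{X}_{\ell};\bs{Y}_{\ell}) \leqsim K_{\ell}(T+O_{T}(1))\log\rho$, which is exactly the claim.

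I do not expect any genuine obstacle here, since the statement is an immediate asymptotic simplification of a bound already in hand; the only point requiring care is the bookkeeping of the two distinct asymptotic regimes in play. The symbol $\leqsim$ absorbs an $O_{\rho}(1)$ term, a constant in the SNR $\rho$, whereas the newly introduced $O_{T}(1)$ tracks terms that are constant in the coherence block length $T$; these two must not be conflated, and in particular the $-K_{\ell}^{2}\log\rho$ contribution is absorbed into the $O_{T}(1)$ factor and \emph{not} into the $\leqsim$. The interpretive content worth emphasizing is that, in the long-coherence-block limit, the quadratic penalty $K_{\ell}^{2}$ becomes negligible relative to $K_{\ell}T$, so each of the $K_{\ell}$ users recovers essentially one degree of freedom per channel use, matching the centralized pilot-free ideal discussed in the introduction.
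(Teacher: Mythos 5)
Your proposal is correct and matches the paper's own proof, which simply states that the corollary ``derives directly from Proposition~\ref{prop:1}''; you have merely spelled out the elementary rewriting $T-K_{\ell}=T+O_{T}(1)$, valid since $K_{\ell}$ is constant in $T$. The only thing the paper adds is a remark that the bound could alternatively be obtained via a genie disclosing both $\bs{H}$ and $\bs{X}_{-\ell}$, but that is an aside, not a different proof of this statement.
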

\begin{proof}The bound derives directly from Proposition~\ref{prop:1}. Alternatively, it can be derived from a genie-aided detection where both $\bs{H}$ and $\bs{X}_{-\ell}$ are disclosed to the detector.
\end{proof}
The above bound is the maximum mutual information we can achieve since $\bs{X}_{\ell}$ is maximally entropic, namely, $h(\bs{X}_{\ell}) \leqsim TK_{\ell}\log\rho$ under the average power constraint. Notice that both bounds are optimistic in the sense that we are disclosing the interference to the detector.

If the cellular network were cooperative, the overall uplink channel from $(\bs{X}_{1},\dotsc,\bs{X}_{L})$ to $(\bs{Y}_{1},\dotsc,\bs{Y}_{L})$ would have $K(T-K)$ degrees of freedom; thus, on a per-cell basis, it would have $K(T-K)/L$ degrees of freedom. It would be as if cell $\ell$ contributes with $K_{\ell}(T-K)$, which is achievable via orthogonal training over the whole network. 

Together, the above results suggest that, for large $T$, the uplink channel of cell $\ell$ offers $K_{\ell}T$ degrees of freedom, while for smaller $T$ there is a penalty of order $n^{2}$.

\subsection{Approaching Capacity Without Pilots}
The main result of this section is as follows (cf. Theorem~\ref{thm:thm1}).

\begin{theorem}\label{thm:thm2}Let $n=K$. For all absolutely continuous distributions in the maximally entropic ensemble, it holds that
\begin{align}
I(\bs{X}_{\ell}; \bs{Y}_{\ell}) 
& \geqsim h(\bs{Y}_{\ell}|\bs{H}) - nK\log\rho - TK_{-\ell}\log\rho \\
& \simeq  (TK_{\ell}-n^{2})\log\rho\label{eq:thm2-2}
\end{align}
where the conditional differential entropy of the output given the channel can be expressed as
\[ h(\bs{Y}_{\ell}|\bs{H}) = \Epar{}{ - \Epar{}{\log \Mathsf{f}_{\theta^{*}}\T(\bs{Y}_{\ell}) \+ \Big|\+ \bs{H}=\bs{H}^{(0)}_{\ell} } }\]
and $\theta^{*}=\arg\max\limits_{\theta}\bar{\Mathscr{L}}(\theta;\bs{Y}_{0\ell})$ with
\begin{align*}
\bar{\Mathscr{L}}(\theta;\bs{Y}_{\ell}) = & - D(\Mathsf{f}_{\theta_{0}}\!\-\T\| \Mathsf{f}_{\theta}\T|\bs{H}=\bs{H}^{(0)}_{\ell}) \\ 
& - \Epar{}{\log \Mathsf{f}_{\theta}\T(\bs{Y}_{\ell}) \+\Big| \+ \bs{H}=\bs{H}^{(0)}_{\ell} }.
\end{align*}
\end{theorem}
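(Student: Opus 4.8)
The plan is to mirror the proof of Theorem~\ref{thm:thm1}, splitting $I(\bs{X}_{\ell};\bs{Y}_{\ell}) = h(\bs{Y}_{\ell}) - h(\bs{Y}_{\ell}|\bs{X}_{\ell})$ and bounding each term separately, the only genuinely new ingredient being the accounting of the out-of-cell interference $\bs{H}_{-\ell,\ell}\bs{X}_{-\ell}$. For the first term, conditioning reduces differential entropy, so $h(\bs{Y}_{\ell}) \geq h(\bs{Y}_{\ell}|\bs{H})$; since $n=K$ forces the full channel $\bs{H}_{\ell}$ to be square and almost surely invertible, the columnwise change of variables $\bs{X}\mapsto\bs{H}_{\ell}\bs{X}$ gives $h(\bs{Y}_{\ell}|\bs{H}) \simeq h(\bs{X}) + T\E{}{\log|\det\bs{H}_{\ell}|} \simeq KT\log\rho$, exactly as in the single-user case, because the Jacobian term is $O_{\rho}(1)$ and the aggregate input $\bs{X}$ of all $K$ users is maximally entropic.

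For the second term I would introduce the out-of-cell data as an auxiliary variable. First I would write
\[ h(\bs{Y}_{\ell}|\bs{X}_{\ell}) \leq h(\bs{Y}_{\ell},\bs{X}_{-\ell}|\bs{X}_{\ell}) = h(\bs{X}_{-\ell}|\bs{X}_{\ell}) + h(\bs{Y}_{\ell}|\bs{X}_{\ell},\bs{X}_{-\ell}), \]
using that appending a variable cannot decrease joint entropy, followed by the chain rule. By the independence assumption \eqref{eq:indepsymb}, $h(\bs{X}_{-\ell}|\bs{X}_{\ell}) = h(\bs{X}_{-\ell}) \leqsim TK_{-\ell}\log\rho$ from maximal entropy under the power constraint. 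The remaining term conditions on the entire input $\bs{X}=(\bs{X}_{\ell},\bs{X}_{-\ell})$, so $h(\bs{Y}_{\ell}|\bs{X}_{\ell},\bs{X}_{-\ell}) = h(\bs{Y}_{\ell}|\bs{X})$ is precisely the quantity bounded in Theorem~\ref{thm:thm1}: exploiting the Gaussianity and cross-antenna independence of $\bs{H}_{\ell}$ yields $h(\bs{Y}_{\ell}|\bs{X}) \leqsim n\E{}{\log\det(\bs{I}+\bs{X}\bs{X}^{\dag})} \leqsim n(n\wedge T)\log\rho = nK\log\rho$. Adding the two pieces gives $h(\bs{Y}_{\ell}|\bs{X}_{\ell}) \leqsim nK\log\rho + TK_{-\ell}\log\rho$, which is exactly the penalty appearing in the statement.

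Combining the two bounds yields the first displayed inequality, and substituting $h(\bs{Y}_{\ell}|\bs{H}) \simeq KT\log\rho$ together with $nK=n^{2}$ and $K-K_{-\ell}=K_{\ell}$ collapses the degrees-of-freedom count to $(TK_{\ell}-n^{2})\log\rho$, establishing \eqref{eq:thm2-2}. Finally, the representation of $h(\bs{Y}_{\ell}|\bs{H})$ is obtained verbatim from the single-user argument: the identity
\[ h(\bs{Y}_{\ell}|\bs{H}=\bs{H}^{(0)}_{\ell}) = -D(\Mathsf{f}_{\theta_{0}}\T\| \Mathsf{f}_{\theta}\T|\bs{H}=\bs{H}^{(0)}_{\ell}) - \Epar{}{\log \Mathsf{f}_{\theta}\T(\bs{Y}_{\ell}) \+\big|\+\bs{H}=\bs{H}^{(0)}_{\ell}} \]
holds for every $\theta$ because its left-hand side is $\theta$-free, and choosing $\theta=\theta^{*}$ the maximizer of $\bar{\Mathscr{L}}(\theta;\bs{Y}_{0\ell})$ delivers the stated form.

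The step I expect to be the crux is the interference accounting, i.e.\ the inequality $h(\bs{Y}_{\ell}|\bs{X}_{\ell}) \leq h(\bs{X}_{-\ell}) + h(\bs{Y}_{\ell}|\bs{X})$. Its role is to \emph{give away} the out-of-cell symbols at the exact cost of their entropy $TK_{-\ell}\log\rho$, thereby reducing the with-interference problem to the interference-free bound of Theorem~\ref{thm:thm1}. One must check that this giveaway is lossless in the leading pre-log factor, which is guaranteed by the maximal-entropy assumption forcing $h(\bs{X}_{-\ell}) \simeq TK_{-\ell}\log\rho$ rather than merely $\leqsim$; the resulting $TK_{\ell}$ then matches the leading term of the genie upper bound of Proposition~\ref{prop:1}, leaving only the lower-order $n^{2}$ penalty anticipated by the degrees-of-freedom discussion following the upper bounds.
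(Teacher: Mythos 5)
Your overall architecture matches the paper's: reduce to the single-user argument of Theorem~\ref{thm:thm1} and show that the out-of-cell interference costs at most $TK_{-\ell}\log\rho$. In fact your decomposition $I(\bs{X}_{\ell};\bs{Y}_{\ell}) = h(\bs{Y}_{\ell})-h(\bs{Y}_{\ell}|\bs{X}_{\ell})$ is algebraically the same as the paper's chain-rule split $I(\bs{X}_{\ell};\bs{Y}_{\ell})=I(\bs{X};\bs{Y}_{\ell})-I(\bs{X}_{-\ell};\bs{Y}_{\ell}|\bs{X}_{\ell})$, and your handling of $h(\bs{Y}_{\ell})$, of $h(\bs{Y}_{\ell}|\bs{X})$, and of the Kullback--Leibler representation of $h(\bs{Y}_{\ell}|\bs{H})$ coincides with the paper's.

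However, the step you yourself single out as the crux is invalid as stated. The inequality $h(\bs{Y}_{\ell}|\bs{X}_{\ell})\leq h(\bs{Y}_{\ell},\bs{X}_{-\ell}|\bs{X}_{\ell})$, justified by ``appending a variable cannot decrease joint entropy,'' is a property of \emph{discrete} entropy only. For differential entropy one has $h(\bs{Y}_{\ell},\bs{X}_{-\ell}|\bs{X}_{\ell}) = h(\bs{Y}_{\ell}|\bs{X}_{\ell}) + h(\bs{X}_{-\ell}|\bs{Y}_{\ell},\bs{X}_{\ell})$, and the correction term $h(\bs{X}_{-\ell}|\bs{Y}_{\ell},\bs{X}_{\ell})$ can be negative, even arbitrarily so. Equivalently, your step amounts to claiming $I(\bs{X}_{-\ell};\bs{Y}_{\ell}|\bs{X}_{\ell})\leq h(\bs{X}_{-\ell})$, which is false for continuous ensembles in general: with $Y=X+\epsilon N$, the mutual information $I(X;Y)$ diverges as $\epsilon\to 0$ while $h(X)$ stays fixed. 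The maximal-entropy assumption controls $h(\bs{X}_{-\ell})$, not the sign of the posterior entropy $h(\bs{X}_{-\ell}|\bs{Y}_{\ell},\bs{X}_{\ell})$, so it cannot ``guarantee'' this step as you assert. In the present model the step does happen to hold up to $O_{\rho}(1)$ terms, because the noise has fixed unit variance; but proving that (i.e., proving $h(\bs{X}_{-\ell}|\bs{Y}_{\ell},\bs{X}_{\ell})\geqsim 0$) requires precisely the bound you are trying to bypass. The paper avoids the issue by bounding the interference term directly: $I(\bs{X}_{-\ell};\bs{Y}_{\ell}|\bs{X}_{\ell})\leq I(\bs{X}_{-\ell};\bs{Y}_{\ell}|\bs{H},\bs{X}_{\ell})=I(\bs{X}_{-\ell};\bs{H}_{-\ell}\bs{X}_{-\ell}+\bs{Z}|\bs{H}_{-\ell})\leq T\E{}{\log\det(\bs{I}+\rho\bs{H}_{-\ell}^{\phantom{\dag}}\bs{H}_{-\ell}^{\dag})}\leqsim TK_{-\ell}\log\rho$, i.e., a genie reveals the channels and the coherent MIMO log-det bound does the degrees-of-freedom counting. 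Replace your entropy-giveaway step with this genie-plus-log-det argument and the rest of your proof goes through unchanged.
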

\begin{proof} Split the mutual information $I(\bs{X}_{\ell};\bs{Y}_{\ell})$ as follows:
\[ I(\bs{X}_{\ell};\bs{Y}_{\!\ell}) = I(\bs{X}; \bs{Y}_{\!\ell}) - {I(\bs{X}_{-\ell};\bs{Y}_{\!\ell}|\bs{X}_{\ell})}. \] 
The first term on the right-hand side can be treated as in the single-user case. Specifically, let analyze the mutual information $I(\bs{X}; \bs{Y}_{\ell})$ in terms of $h(\bs{Y}_{\ell})$ and $h(\bs{Y}_{\ell}|\bs{X})$. The first term, $h(\bs{Y}_{\ell})$, can be rewritten as follows
\[ h(\bs{Y}_{\ell}) = \Epar{}{ - \Epar{}{\log \Mathsf{f}_{\theta^{*}}\T(\bs{Y}_{\ell}) \Big|\bs{H}=\bs{H}^{(0)}_{\ell} } } \]
where $\theta^{*}$ is defined in the statement. The second term, $h(\bs{Y}_{\ell}|\bs{X})$, can be bounded by noticing that the conditional distribution of $\bs{Y}_{\ell}$ given $\bs{X}$ is Gaussian, and thus 
\begin{align*}
h(\bs{Y}_{\ell}|\bs{X}) 
& \leqsim n \E{}{\log\det(\bs{X}\bs{X}^{\dag})} \\
& \leqsim n (K\wedge T)\log\rho,
\end{align*}
where we ignored channel attenuations, which do not play any role asymptotically at high SNR. 

Therefore, in summary, the first mutual information is tightly bounded as follows:
\begin{align}
\hspace{-1ex}I(\bs{X}; \bs{Y}_{\ell}) 
&\, {\geqsim \Epar{}{ - \Epar{}{\log \Mathsf{f}_{\theta^{*}}\T(\bs{Y}_{\ell}) \Big|\bs{H}=\bs{H}^{(0)}_{\ell} } }} - n^{2}\log\rho,\label{eq:lb}
\end{align}
which follows from using side-information provided by a genie about all channels.

The second mutual information, $I(\bs{X}_{-\ell};\bs{Y}_{\!\ell}|\bs{X}_{\ell})$, can be upper bounded as follows:
\begin{align}
I(\bs{X}_{-\ell};\bs{Y}_{\ell}|\bs{X}_{\ell}) 
& \leq I(\bs{X}_{-\ell};\bs{Y}_{\ell}|\bs{H}, \bs{X}_{\ell}) \nonumber \\ 
& = I(\bs{X}_{-\ell};\bs{H}_{-\ell}\bs{X}_{-\ell}+\bs{Z}|\bs{H}_{-\ell}) \nonumber\\
& \leq T \E{}{\log\det(\bs{I}+\rho\bs{H}_{-\ell}^{\phantom{\dag}}\bs{H}_{-\ell}^{\dag})} \nonumber \\
& \leqsim T K_{-\ell}\log \rho \label{eq:ub2}
\end{align}

Using together \eqref{eq:lb} and \eqref{eq:ub2} concludes the proof.
\end{proof}

Another way to interpret Theorem~\ref{thm:thm2} is as follows:
\[  \sup  I(\bs{X}_{\ell}; \bs{Y}_{\ell}) \geqsim [TK_{\ell} + O_{T}(1)]\log\rho. \]
That is, the pre-log factor loss is constant in $T$, which suggests that, for long coherence blocks, the bound is tight. In practice, \eqref{eq:thm2-2} suggests that $TK_{\ell}\gg n^{2}$ for the bound to be close to the coherent capacity, which is equivalent to requiring $T\gg Ln$.

\section{Simulations}
We provide simulations assuming independent symbols over the channel uses, which is a setup similar to ICA. For the sake of simplicity, symbols and channels are real. Symbols are distributed according to a zero-mean, unit-variance Laplacian distribution. The $T$\!-sample log-likelihood for Laplacian sources takes the form
\be\label{eq:ll-laplacian}
\Mathscr{L}(\bs{B};\bs{Y}) = T\log|\det\bs{B}| - \sqrt{2} \sum_{t=1}^{T} \sum_{k=1}^{K}|(\bs{B}\bs{Y})_{kt}|.
\ee
The specific form of the channels is not relevant as long as they are linearly independent. For the purpose of this section, all channels are drawn from an i.i.d. Gaussian distribution with variance accounting for the user-depedent attenuation. Attenuations are drawn at random from a uniform distribution with support $[0.1, 1.9]$. %
Once drawn, the channels are fixed over the coherence block of size $T$.

In Figure~\ref{fig:1} and \ref{fig:4}, we show the absolute value of the statistical correlation coefficient between estimated and true symbols:%
\be \rho_{ij} = \frac{\bs{e}_{i}^{\dag}\bs{X}\bs{\hat{X}}^{\dag}\bs{e}_{j}}{\|\bs{e}_{i}^{\dag}\bs{X}\| \, \|\bs{e}_{j}^{\dag}\bs{\hat{X}}\|}. \ee
Here, $\bs{e}_{i}^{\dag}\bs{X}$ denotes the symbols transmitted by user $i$ over the coherence block. Thus, $\rho_{ij}$ represents the normalized inner product of transmitted and estimated data, which can be interpreted as their statistical correlation coefficient. Since the true signals are independent, we expect $|\rho_{ij}|\approx 1$ for only one $j$ for any fixed $i$; in other words, we expect $(|\rho_{ij}|)$ to be close to a permutation matrix. The order of users in $\bs{X}$ and $\bs{\hat{X}}$ is not the same due to the symmetries of the likelihood function. 

\begin{figure}[tb]
\hspace{-2ex}\includegraphics[width=0.55\textwidth]{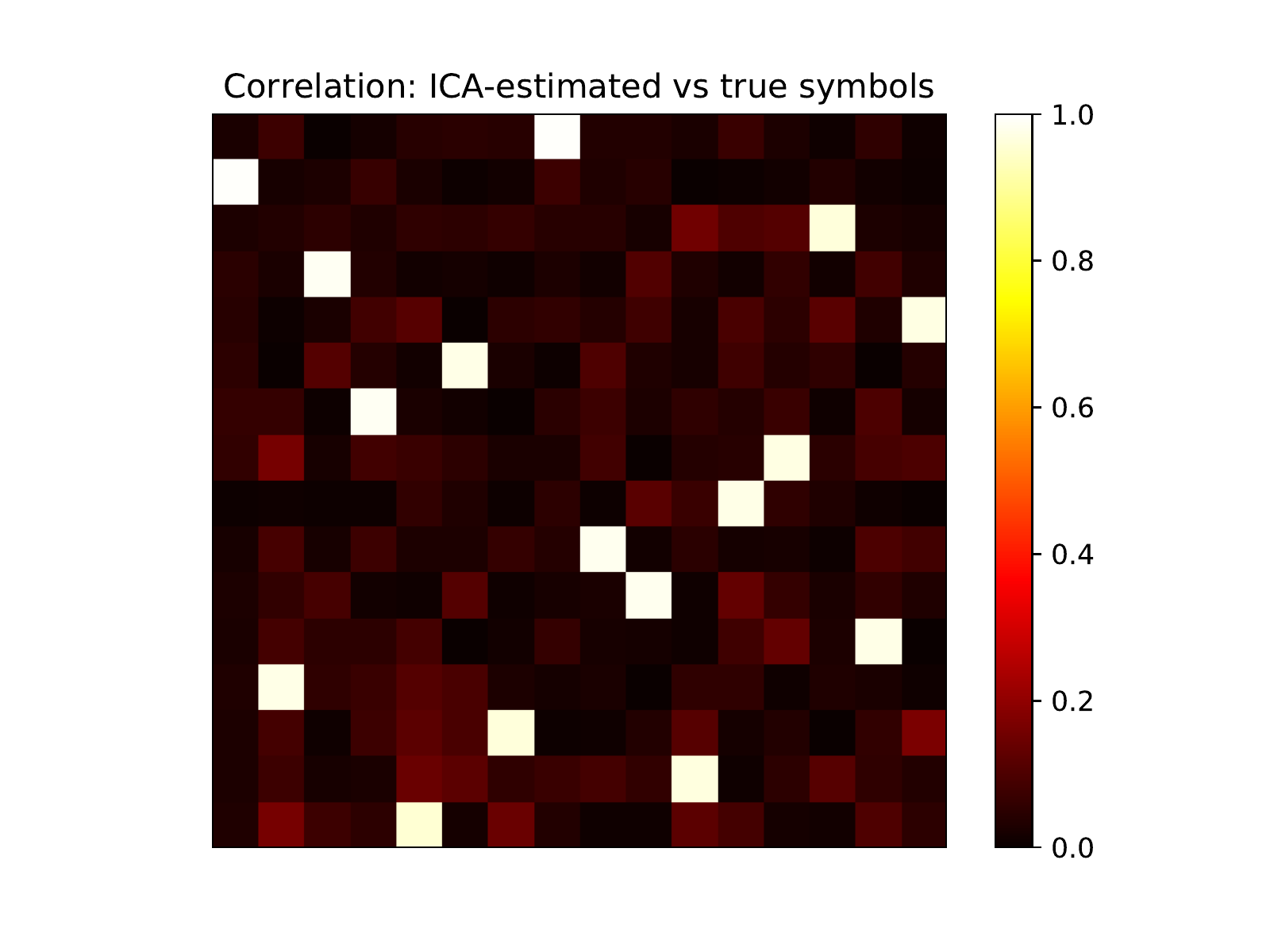}\vspace{-4mm}
\caption{Absolute value of the statistical correlation coefficient between $\bs{\hat{X}}$ and $\bs{X}$ for one realization only of the channel. Parameters: $n=K=16$, $T=2nK=2n^{2}=512$.}
\label{fig:1}
\vspace{-3mm}
\end{figure}

\begin{figure}[tb]
\hspace{-2ex}\includegraphics[width=0.55\textwidth]{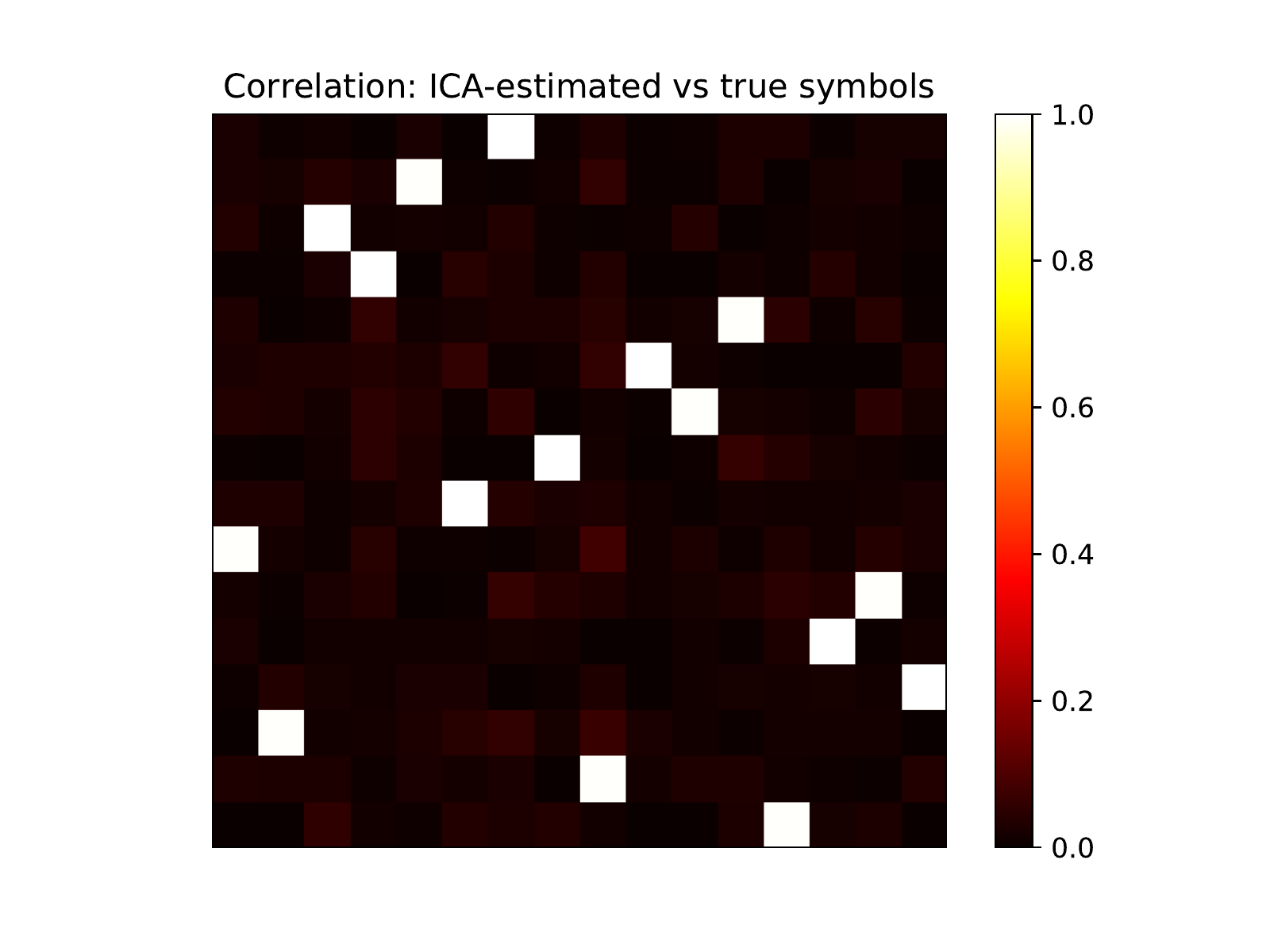}\vspace{-4mm}
\caption{Absolute value of the statistical correlation coefficient between $\bs{\hat{X}}$ and $\bs{X}$ for one realization only of the channel. Parameters: $n=K=16$, $T=8 n^{2}=2048$.}
\label{fig:4}
\vspace{-3mm}
\end{figure}

The two figures show the same setting, in terms of number of users and antennas, with different coherence block sizes. In both cases, we get a picture very close to a permutation matrix. In Figure~\ref{fig:4}, the coherence block length is larger, and the performance is better as expected. We can also observe that the permutation between the two figures is different, which derives from the optimizer following different paths.

\begin{figure}[tb]
\hspace{-2ex}\includegraphics[width=0.55\textwidth]{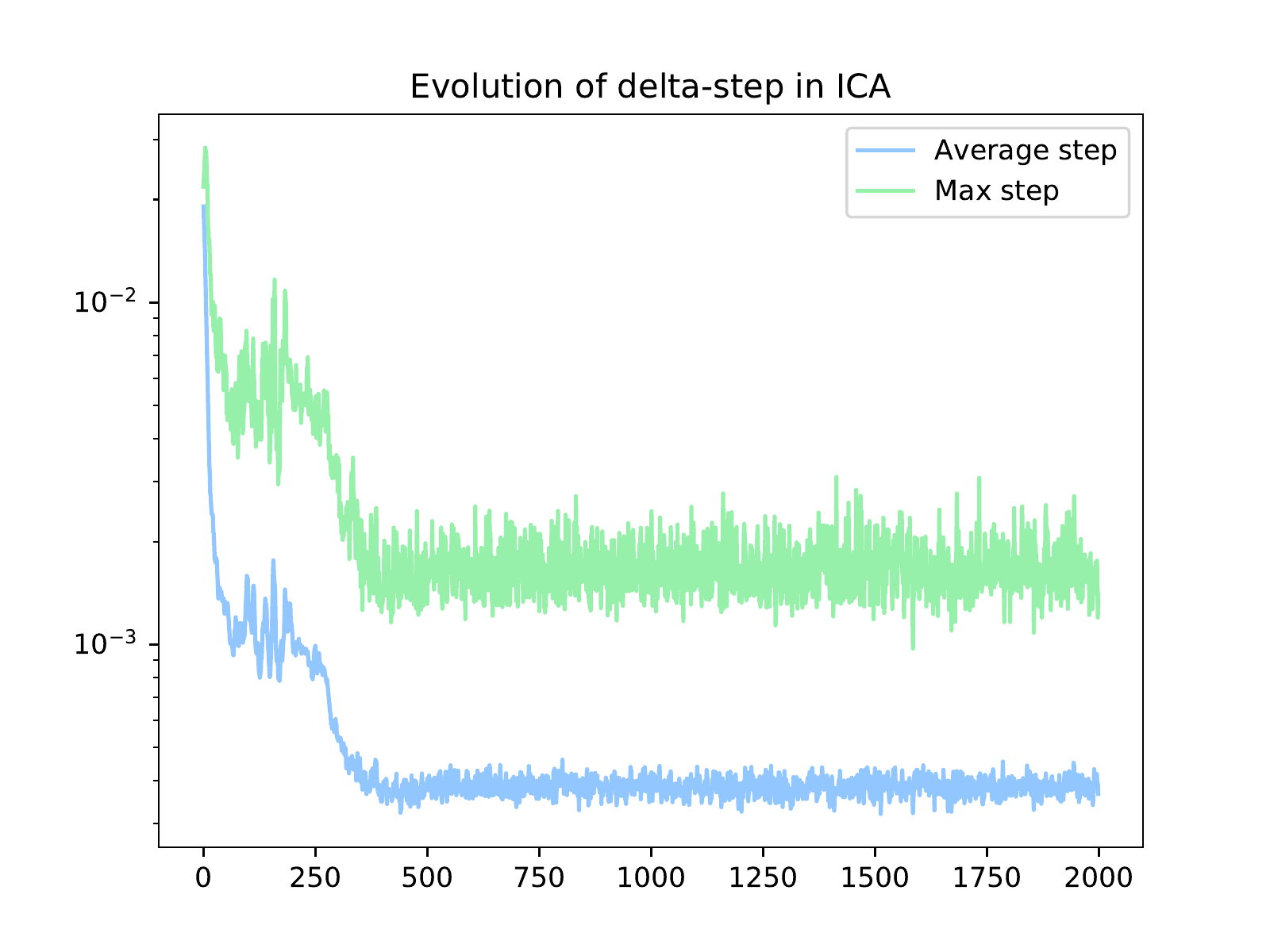}\vspace{-2mm}
\caption{Parameters: $n=K=16$, $T=8 n^{2}=2048$. The x-axis shows the number of steps made during the optimization, each step accounting for 10 iterations.}
\label{fig:2}
\end{figure}

In order to maximize \eqref{eq:ll-laplacian} with respect to $\bs{B}$, we leverage the optimizers implemented in Tensorflow \cite{TF:2016, TFweb:2019}. For the specific figures in this paper, Adam optimizer was used \cite{KinBa:2015}; similar results are achieved with stochastic gradient descent (SGD). The optimizer is initialized at random, e.g. $\bs{B}$ at iteration $0$, denoted $\bs{B}^{[0]}$, is drawn from a Gaussian ensemble, and run for a fixed number of iterations. At each iteration, the entries of the matrix are slightly changed: $\bs{B}^{[k]}=\bs{B}^{[k-1]} + \Delta \bs{B}^{[k-1]}$. Figure~\ref{fig:2} shows the evolution of two quantities derived from $\Delta \bs{B}^{[k-1]}$ as a function of the iteration $k$, namely the \textit{average step} 
$\frac{1}{n^{2}}\sum_{i,j}|B_{ij}^{[k]}-B_{ij}^{[k-1]}|$,  
where $B_{ij}^{[k]}$ is the value of the element $(i,j)$ in $\bs{B}$ at iteration $k$, and the \textit{maximum step}, $\max_{ij} |B_{ij}^{[k]}-B_{ij}^{[k-1]}|$.
The objective function is highly nonlinear, and there is no guarantee of convergence. However, in our numerical experiments, we always found a very good local maximum the log-likelihood, and both average and maximum steps tend to decrease as the number of iterations grows, which is a good indication of convergence.

\bibliographystyle{IEEEtran}
\bibliography{IEEEabrv,publishers,confs-jrnls,refs,biblio-clean}




\end{document}